\newtheorem{definition}{Definition}
\newtheorem{theorem}{Theorem}
\newtheorem{corollary}{Corollary}
\newtheorem{lemma}{Lemma}
\providecommand{\pre}{\mathrm{Pre}}
\providecommand{\suf}{\mathrm{Suf}}
\newcolumntype{C}{>{\centering\arraybackslash}X} 
\providecommand{\bd}[1]{\boldsymbol{#1}}
\providecommand{\bb}[1]{\mathbb{#1}}
\newtheorem{construction}{Construction}
\def\qi#1 {\fbox {\footnote {\ }}\ \footnotetext { From Qi: {\color{red}#1}}}
\begin{document}
\title{On the maximum size of variable-length non-overlapping codes}

\author{Geyang~Wang}
\address{ Department of Electrical and Computer Engineering $\&$ Institute for Systems Research, University of Maryland, College Park, MD 20742, USA}
\curraddr{}
\email{wanggy@umd.edu}
\thanks{}

\author{Qi~Wang}
\address{Department of Computer Science and Engineering $\&$ National Center for Applied Mathematics Shenzhen, Southern University of Science and Technology, Nanshan District, Shenzhen, Guangdong 518055, China}
\curraddr{}
\email{wangqi@sustech.edu.cn}
\thanks{}

\subjclass[2010]{Primary 94A45; Secondary 94B25}
\keywords{non-overlapping code, variable-length, fixed-length} 

\date{}

\dedicatory{}

\maketitle              
\begin{abstract}
Non-overlapping codes are a set of codewords such that the prefix of each codeword is not a suffix of any codeword in the set, including itself. If the lengths of the codewords are variable, it is additionally required that every codeword is not contained in any other codeword as a subword. Let $C(n,q)$ be the maximum size of $q$-ary fixed-length non-overlapping codes of length $n$. The upper bound on $C(n,q)$ has been well studied. However, the nontrivial upper bound on the maximum size of variable-length non-overlapping codes of length at most $n$ remains open. 
In this paper, by establishing a link between variable-length non-overlapping codes and fixed-length ones, we are able to show that the size of a $q$-ary variable-length non-overlapping code is upper bounded by $C(n,q)$. Furthermore, we prove that the average length of the codewords in a $q$-ary variable-length non-overlapping codes is lower bounded by $\lceil \log_q \tilde{C} \rceil$, and is asymptotically no shorter than $n-2$ as $q$ approaches $\infty$, where $\tilde{C}$
denotes the cardinality of $q$-ary variable-length non-overlapping codes of length up to $n$.  
\end{abstract}
\section{Introduction}\label{sec-intro}
Motivated by applications for synchronization in communications, the study of {\em non-overlapping codes} (also called strongly regular codes, cross-bifix-free codes) dates back to the 1950s~\cite{Schu56}. A code $S \subseteq \cup_{n \ge2} \mathbb{Z}_q^n$ is called {\em non-overlapping} if the two conditions are satisfied: 1) the prefix of each codeword is not a suffix of any codeword in $S$, including itself; 2) for all distinct codewords $\mathbf{u,v} \in S$, $\mathbf{u}$ does not
contain $\mathbf{v}$ as a subword. If all the codewords in $S$ have the same length $n$, then the second condition above is automatically satisfied, and in this case, $S$ is called a {\em fixed-length} non-overlapping code; otherwise, it is called a {\em variable-length} non-overlapping code. 

The study of non-overlapping codes mainly focuses on deriving their bounds on the cardinality with respect to the parameters including the code length $n$ and the alphabet size $q$, and also constructing non-overlapping codes of large size close to the bounds. The first construction was given by Levenshtein~\cite{Leven65,Leven70,Gilbert60}, and was rediscovered in~\cite{WW00,BS04,Chee13}. Recently, non-overlapping codes have found important applications in DNA storage systems~\cite{Yazdi18,LY19}. For
more constructions on fixed-length non-overlapping codes, for example, see~\cite{Bajic14,Bern14,Blackburn15,Bern17,Barcu18}. In addition, we refer to~\cite{WW22,Blackburn22,Stan23,Qin23,Cai23} for a series of recent advances in non-overlapping codes and their extensions. 

Denote by $C(n,q)$ the maximum size of fixed-length non-overlapping codes over an alphabet of size $q$. The best-known upper bound given by Levenshtein~\cite{Leven70} is
\[
    C(n,q) \le   \left( \frac{n-1}{n}\right)^{n-1} \frac{q^n}{n}.
\]
Blackburn~\cite{Blackburn15} further showed the tightness of this bound if $n$ divides $q$.  In 2017, Bilotta~\cite{Bilotta17} defined variable-length non-overlapping codes and gave a binary construction by extending Levenshtein's construction. Later, the authors~\cite{WW22} proposed a generating functions approach in constructing $q$-ary non-overlapping codes for both the fixed-length and variable-length cases. 
Unlike the fixed-length case, for the maximum size of variable-length non-overlapping codes, it seems difficult to find a direct upper bound, and only recursive bounds were reported~\cite{Bilotta17,WW22}. More precisely, let $S \subseteq \cup_{i = m}^n \mathbb{Z}_q^i$ denote a variable-length non-overlapping code over $\mathbb{Z}_q = \{0,1,\ldots, q-1\}$, an alphabet of size $q$, and $S_i = S \cap \mathbb{Z}_q^i$ denote the set of codewords of length $i$ in $S$ for $m \leq i \leq n$. One direct way to bound $S$ is to sum up all the values of
$|S_i|$ together, and this leads to the trivial bound $|S| \leq \sum_{i=m}^n C(i,q)$~\cite{Bilotta17}. Intuitively, this suggests that a variable-length non-overlapping code with codeword length at most $n$ may possibly contain more codewords than a fixed-length non-overlapping code with code length $n$. To the best knowledge of the authors, the problem of deriving a nontrivial direct bound on the cardinality of variable-length non-overlapping codes remains open. 

In this paper, we first establish a new link between variable-length non-overlapping codes and fixed-length ones by showing that a variable-length non-overlapping code can always be extended to a fixed-length non-overlapping code in a systematic way. The cardinality of variable-length non-overlapping codes can thereby be upper bounded by that of fixed-length non-overlapping codes. Furthermore, we investigate the average length of codewords in a variable-length
non-overlapping codes, and prove that it is lower bounded by $\lceil \log_q \tilde{C} \rceil$, where $\tilde{C}$ denotes the cardinality of $q$-ary variable-length non-overlapping codes of length up to $n$.

The rest of this paper is organized as follows. In Section~\ref{sec-pre}, we introduce some notations and definitions. In Section~\ref{sec-bound}, we first give a systematic way to extend variable-length non-overlapping codes to fixed-length non-overlapping codes, and then bound the cardinality of variable-length non-overlapping codes. In Section~\ref{sec-avg}, we provide results on the minimum average length of codewords in variable-length non-overlapping codes. Finally, we conclude this paper in Section~\ref{sec-con}.

\section{Preliminaries}~\label{sec-pre}
Let $n,q$ both be integers larger than $1$. Throughout this paper, let $\mathbb{Z}_q = \{0,1,\dots,q-1\}$ be the $q$-ary alphabet. Then $q$-ary codewords are vectors over $\mathbb{Z}_q$, and sometimes for convenience, we write vectors as strings. A $q$-ary code is a set of codewords over $\mathbb{Z}_q$, and is called {\em variable-length} if its codewords have different lengths.  The size of a $q$-ary code $S$ is the number of codewords in $S$, and is denoted by $|S|$. For a codeword $\bd{s} \in S$, its length is denoted by $|\bd{s}|$. 

For each $\bd{s} = (s_1,s_2,\dots,s_n) \in \bb{Z}_q^n$, denote the prefix and suffix of $\bd{s}$ of length $k$ by $\pre(\bd{s},k)= (s_1,\dots,s_k)$ and $\suf(\bd{s},k) = (s_{n-k+1}, \dots, s_n)$, respectively, where $0 \le k \le n$. In particular, we define $\pre(\bd{s},0)$ and $\suf(\bd{s},0)$ to be the empty string. Define 
$$
\pre(\bd{s}) = \{\pre(\bd{s},k) : k = 1,\dots, n-1\},
$$
and 
$$
\suf(\bd{s}) = \{\suf(\bd{s},k) : k = 1,\dots, n-1\}
$$ 
as the set of all nontrivial prefixes and suffixes of the codeword $\bd{s}$, respectively. The concatenation of two codewords $\bd{u}$ and $\bd{v}$ is denoted by $\bd{u}||\bd{v}$.

\begin{definition}[Non-overlapping codes]\label{def-noc}
    Let $S$ be a subset of $\cup_{i=2}^n \mathbb{Z}_q^i$. Then $S$ is called {\em non-overlapping} if 
    \begin{enumerate}
      \item[1)] For all $\bd{u},\bd{v} \in S, \pre(\bd{u}) \cap \suf(\bd{v}) = \emptyset$ ($\bd{u}$ and $\bd{v}$ may be identical);
        \item[2)] For all distinct $\bd{u},\bd{v} \in S$ with $|\bd{u}| \le |\bd{v}|$, $\bd{v}$ does not contain $\bd{u}$ as a subword, i.e., $\bd{u} \ne (v_{j+1}, v_{j+2}, \dots, v_{j+|\bd{u}|})$ for $0 \leq j \leq |\bd{v}|-|\bd{u}|$.
    \end{enumerate}
\end{definition}

For example, $1001$ is overlapping since $\pre(1001,1) = \suf(1001,1) = 1$, and $\{1100,10\}$ is also overlapping since $10$ is a subword of $1100$. Clearly, the code $\{11000,11010\}$ is non-overlapping. A non-overlapping code $S \subseteq \cup_{i=2}^n \mathbb{Z}_q^i$ is called {\em maximal} (or {\em non-expandable}) if for any $x \in \cup_{i=2}^n \mathbb{Z}_q^i \setminus S$, $S \cup \{x\}$ is overlapping, and $S$ is called {\em maximum} if $|S| \ge |S'|$ for any other non-overlapping code $S' \subseteq  \cup_{i=2}^n \mathbb{Z}_q^i$. We define both maximal and the maximum non-overlapping codes for the fixed-length case in a similar way.

By definition, it is straightforward to see that non-overlapping codes are also prefix codes, as defined in the following.

\begin{definition}[Prefix codes]\label{def-pc}\cite{CT06}
  A code $S \subseteq \cup_{i=2}^{n} \mathbb{Z}_q^i$ is called a {\em prefix code} if $\bd{u} \notin \pre{(\bd{v})}$ for all two distinct codewords $\bd{u}, \bd{v} \in S$. 
\end{definition}

By Definition~\ref{def-pc}, in a prefix code, no codeword can be a prefix of any other codeword.

\section{Upper bound on the size of variable-length non-overlapping codes}\label{sec-bound}

In this section, we first establish a link between variable-length non-overlapping codes and fixed-length ones. By the link, a variable-length non-overlapping code can always be transformed to a fixed-length non-overlapping code. 
The upper bound of variable-length non-overlapping codes can thereby be derived by that of fixed-length codes.

The following is a general construction that extends the codewords of different lengths in variable-length non-overlapping codes to those of the same length. 

\begin{construction} \label{cons:variable-to-fix}
    Let $S \subseteq \cup_{i=2}^n \mathbb{Z}_q^i$ be a $q$-ary variable-length code with codeword length at most $n$.
    Define
    \[
        \tilde{S} = \cup_{\bd{s} \in S} \tilde{\bd{s}},
    \]
    where
\begin{equation}\label{eqn-s}
  \tilde{\bd{s}} = \{ \bd{s} || \suf(\bd{x}, n - |\bd{s}|) : \bd{x} \in S \textrm{ and } |\bd{x}| > n - |\bd{s}|\}.
\end{equation}
\end{construction}

In essence, the idea of Construction~\ref{cons:variable-to-fix} is to extend all codewords that have length less than $n$ to codewords of length $n$, by appending all possible suffixes of certain codewords in the code. In such a way, $\tilde{S}$ is a fix-length code of length $n$. 

The following result will be useful in the proof that $\tilde{S}$ is a fixed-length non-overlapping code if $S$ is non-overlapping.

\begin{lemma}\label{lem-uv}
  Suppose that $S \subseteq \cup_{i=2}^n \mathbb{Z}_q^i$ is a $q$-ary variable-length non-overlapping code. For two distinct codewords $\bd{u}, \bd{v} \in S$, we have $\tilde{\bd{u}} \cap \tilde{\bd{v}} = \emptyset$, where $\tilde{\bd{u}}, \tilde{\bd{v}}$ are defined as in Eq.~(\ref{eqn-s}). 
\end{lemma}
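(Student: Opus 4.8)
The plan is to argue by contradiction, exploiting the single structural fact that every element of $\tilde{\bd{s}}$ carries $\bd{s}$ as its length-$|\bd{s}|$ prefix, together with the observation already noted in the excerpt that a non-overlapping code is automatically a prefix code. Suppose, toward a contradiction, that there exists a common word $\bd{w} \in \tilde{\bd{u}} \cap \tilde{\bd{v}}$. By the definition in Eq.~(\ref{eqn-s}), I can write
\[
  \bd{w} = \bd{u} \,||\, \suf(\bd{x}, n - |\bd{u}|) = \bd{v} \,||\, \suf(\bd{y}, n - |\bd{v}|)
\]
for suitable $\bd{x}, \bd{y} \in S$, where both expressions have length exactly $n$.

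The key step is to read off two prefixes of $\bd{w}$: since $\bd{u}$ occupies the first $|\bd{u}|$ coordinates of $\bd{w}$ and $\bd{v}$ occupies the first $|\bd{v}|$ coordinates, we have $\pre(\bd{w}, |\bd{u}|) = \bd{u}$ and $\pre(\bd{w}, |\bd{v}|) = \bd{v}$. Without loss of generality assume $|\bd{u}| \le |\bd{v}|$. Then $\bd{u} = \pre(\bd{w}, |\bd{u}|) = \pre(\bd{v}, |\bd{u}|)$, i.e., $\bd{u}$ is a prefix of $\bd{v}$.

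Finally I would split into two cases. If $|\bd{u}| < |\bd{v}|$, then $\bd{u} \in \pre(\bd{v})$ is a nontrivial prefix of $\bd{v}$, which contradicts the prefix-code property of $S$ recorded in Definition~\ref{def-pc}; if $|\bd{u}| = |\bd{v}|$, then $\bd{u} = \bd{v}$, contradicting that $\bd{u}$ and $\bd{v}$ are distinct. Either way we reach a contradiction, so $\tilde{\bd{u}} \cap \tilde{\bd{v}} = \emptyset$.

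I do not expect a genuine obstacle here: the whole argument rests on the fact that Construction~\ref{cons:variable-to-fix} leaves $\bd{s}$ intact as the initial segment, so the prefix-code property does all the work. The only points requiring a little care are confirming that one may reduce to $|\bd{u}| \le |\bd{v}|$ by symmetry, and separately handling the equal-length case, where distinctness rather than the prefix property supplies the contradiction. It is worth emphasizing that the appended suffixes $\suf(\bd{x}, \cdot)$ and $\suf(\bd{y}, \cdot)$ never enter the argument; disjointness already follows from $S$ being a prefix code, and the full non-overlapping structure is not needed at this stage (it will instead be required later to show that $\tilde{S}$ itself is non-overlapping).
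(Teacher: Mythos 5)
Your proof is correct and takes essentially the same route as the paper's: the paper likewise argues that a common element of $\tilde{\bd{u}} \cap \tilde{\bd{v}}$ forces the shorter of the two codewords to coincide with the longer one or to be its prefix, which is ruled out by distinctness and the non-overlapping (hence prefix-code) property of $S$. Your version is merely a more explicit write-up of the same idea, correctly noting that the appended suffixes play no role.
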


\begin{proof}
  Assume without loss of generality that $|\bd{u}| \ge |\bd{v}|$. By the definition in Eq.~(\ref{eqn-s}), if $\tilde{\bd{u}} \cap \tilde{\bd{v}} \ne \emptyset$, then the codeword $\bd{v}$ must be the same as $\bd{u}$ or must be identical with the first $|\bd{v}|$ symbols. The latter case means that $\bd{v}$ is a codeword of $\bd{u}$, and cannot happen since $S$ is non-overlapping. 
\end{proof}

\begin{theorem}\label{thm:variable-to-fix}
  The fixed-length code $\tilde{S}$ by {\rm Construction~\ref{cons:variable-to-fix}} is non-overlapping if $S$ is non-overlapping.
\end{theorem}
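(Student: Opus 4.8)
The plan is to reduce the theorem to a single prefix–suffix check. Because every word of $\tilde{S}$ has length exactly $n$, the second defining condition of Definition~\ref{def-noc} holds automatically: distinct words of equal length cannot contain one another as a subword. It therefore suffices to verify the first condition, that is, $\pre(\bd{a},k)\neq\suf(\bd{b},k)$ for all $\bd{a},\bd{b}\in\tilde{S}$ (possibly equal) and all $1\le k\le n-1$. I would proceed by contradiction: suppose $p:=\pre(\bd{a},k)=\suf(\bd{b},k)$ for some such $k$, and, following Eq.~(\ref{eqn-s}), write $\bd{a}=\bd{s}\,||\,\suf(\bd{x},n-|\bd{s}|)$ and $\bd{b}=\bd{t}\,||\,\bd{y}'$ with $\bd{y}'=\suf(\bd{y},n-|\bd{t}|)$, where $\bd{s},\bd{t},\bd{x},\bd{y}\in S$; note that the length condition in Eq.~(\ref{eqn-s}) forces $\bd{y}'$ to be a \emph{proper} suffix of the codeword $\bd{y}$.

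The central idea is to locate the head codeword $\bd{s}$ of $\bd{a}$ inside $\bd{b}$. Put $m=\min(|\bd{s}|,k)$ and $w=\pre(\bd{s},m)$; then $w$ is at once a prefix of $\bd{s}$ and the length-$m$ prefix of $p$. Since $p$ is the length-$k$ suffix of $\bd{b}$, the string $w$ occupies the positions $n-k+1,\dots,n-k+m$ of $\bd{b}$. I would then compare this block with the seam of $\bd{b}$, the boundary at position $|\bd{t}|$ separating the head $\bd{t}$ from the appended suffix $\bd{y}'$. Exactly three configurations can occur: $w$ lies entirely inside $\bd{t}$, $w$ lies entirely inside $\bd{y}'$, or $w$ straddles the seam.

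In the first two configurations $w$ is a subword of a single codeword of $S$, namely $\bd{t}$ or $\bd{y}$. If $m=|\bd{s}|$ then $w=\bd{s}$, exhibiting the codeword $\bd{s}$ as a subword of a distinct codeword and contradicting the second condition; if $m=k<|\bd{s}|$ then $w$ is a proper prefix of $\bd{s}$ that equals a proper suffix of that same codeword, contradicting the first condition. In the straddling configuration, the part of $w$ lying to the left of the seam has some length $j$ with $1\le j<m\le|\bd{s}|$; this part is simultaneously a proper prefix of $\bd{s}$ and a proper suffix of $\bd{t}$, again contradicting the first condition. Every branch thus violates the hypothesis that $S$ is non-overlapping, which proves the claim. (Lemma~\ref{lem-uv} is not needed for this step, though it guarantees that distinct source codewords contribute disjoint blocks to $\tilde{S}$.)

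The step I expect to be the main obstacle is the endpoint bookkeeping. When $k$ meets a boundary value such as $|\bd{s}|$, $|\bd{t}|$, or $n-|\bd{t}|$, a ``subword'' relation threatens to degenerate into an equality of two codewords, which would not by itself contradict the second condition. The decisive point is that $k\le n-1$ forces $p$ to be a proper prefix and a proper suffix of length-$n$ words, so any such degeneracy would force $k=n$ and is excluded; for instance, in the first configuration the degenerate possibility $\bd{s}=\bd{t}$ would require $w$ to start at position $1$ of $\bd{b}$, i.e.\ $n-k+1=1$, hence $k=n$, which is impossible. Carrying out this exclusion in all three configurations, together with pinning down the straddle length $j$ and confirming the inequalities $1\le j$ and $j<|\bd{s}|$, is where the real care is required; the three-way contradiction itself is then routine.
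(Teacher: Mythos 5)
Your proof is correct, but it takes a genuinely different route from the paper's. The paper fixes a pair of distinct extended codewords, invokes Lemma~\ref{lem-uv}, assumes without loss of generality $|\bd{u}|\ge|\bd{v}|$, and then checks the two intersections $\pre(\bd{v}')\cap\suf(\bd{u}')$ and $\pre(\bd{u}')\cap\suf(\bd{v}')$ separately, each time splitting into four cases according to whether the common word crosses \emph{each of the two seams} (the one inside $\bd{u}'$ and the one inside $\bd{v}'$); the remaining case $|\bd{v}|<|\bd{u}|=n$ is declared similar and omitted. You instead quantify over ordered (possibly equal) pairs $(\bd{a},\bd{b})$, truncate the common word $p$ to $w=\pre(\bd{s},\min(|\bd{s}|,k))$, a prefix of the head codeword of $\bd{a}$, and track only the \emph{single} seam of $\bd{b}$, yielding three configurations. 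The truncation plays the role of the paper's second seam, which is why your case count is smaller; it also makes the argument insensitive to which of $|\bd{s}|,|\bd{t}|$ is larger and to whether either head has length $n$, so you need no WLOG, no separately handled (or omitted) boundary case, and self-overlap $\bd{a}=\bd{b}$ is covered uniformly rather than implicitly. You are also right that Lemma~\ref{lem-uv} is dispensable here: only the existence of a decomposition of each word of $\tilde{S}$ is used, while uniqueness is what matters for the counting in Theorem~\ref{thm:variableltfix}. Two small points to make explicit in a full write-up, both of the endpoint-bookkeeping type you flagged: in the first two configurations with $m=k$, the word $w$ ends at position $n$ of $\bd{b}$, so lying inside $\bd{t}$ forces $|\bd{t}|=n$ (hence $w$ is a \emph{proper suffix}, not merely a subword, of $\bd{t}$), and lying inside $\bd{y}'$ makes $w$ a suffix of $\bd{y}'$ and hence a proper suffix of $\bd{y}$ since $|\bd{y}'|<|\bd{y}|$; and in the straddling configuration you also need $j<|\bd{t}|$ for properness of the suffix, which follows from $n-k\ge1$. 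Both checks go through, so the three-way contradiction is complete.
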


\begin{proof}
The distinct codewords $\bd{u}', \bd{v}' \in \tilde{S}$ are constructed by extending some $\bd{u},\bd{v} \in S$. 
By Lemma~\ref{lem-uv}, such $\bd{u}, \bd{v}$ are unique. 
If  $|\bd{u}| = |\bd{v}| = n$, then we directly have  $\bd{u}' = \bd{u}$, $\bd{v}'= \bd{v}$, and they are clearly non-overlapping.
Without loss of generality, we assume that $|\bd{u}| \ge |\bd{v}|$ from now on.

\begin{figure}[h!]
  \centering
  \includegraphics[width=0.9\textwidth]{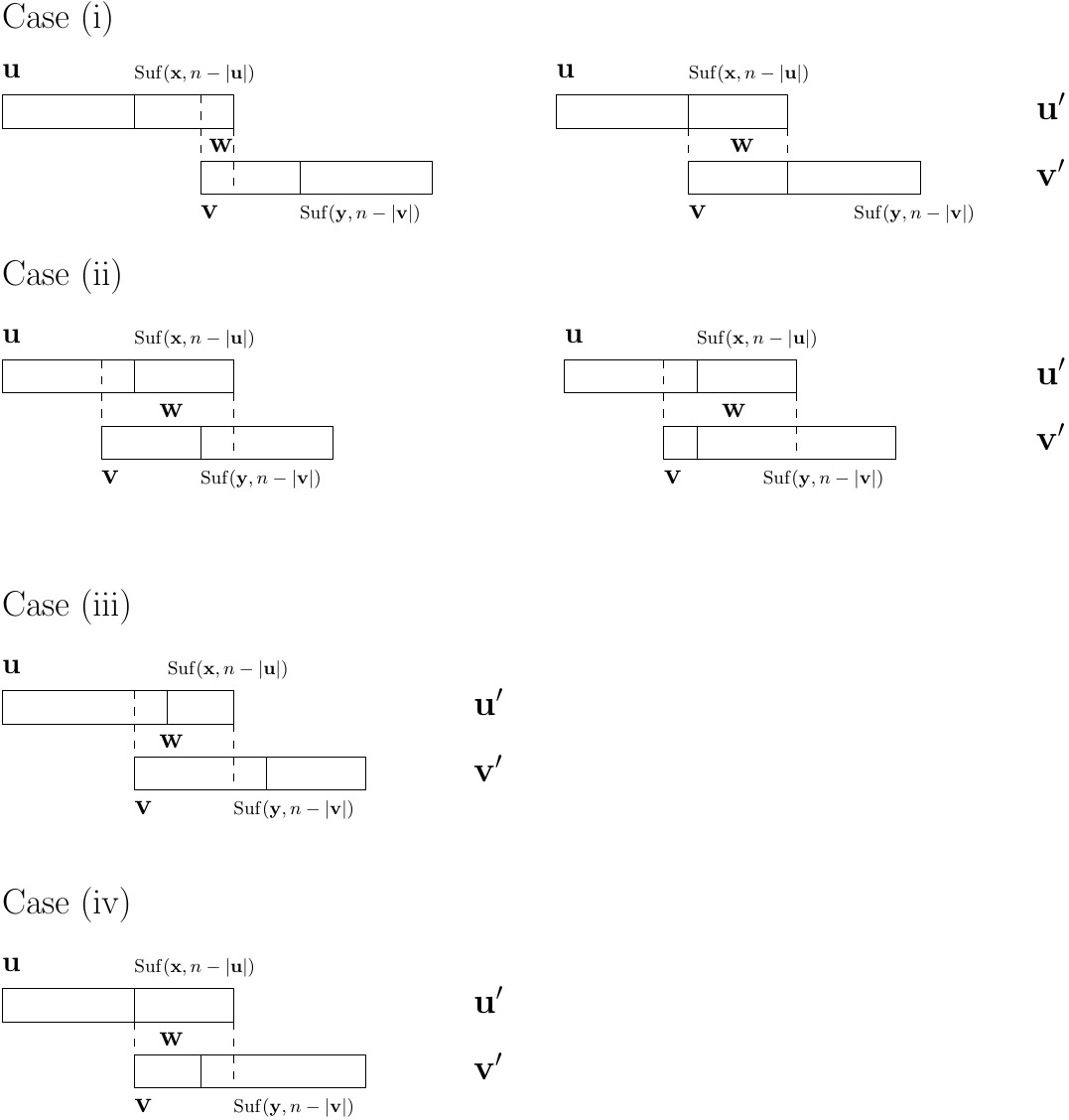}
  \caption{Four possible cases for $\bd{w} \in \pre(\bd{v}') \cap \suf(\bd{u}')$.}\label{fig:suf_u_pre_v}
\end{figure}

We first consider the case when $|\bd{u}|<n$.
It is clear that there uniquely exists a pair of codewords $\bd{x}, \bd{y} \in S$ such that $\bd{u'} = \bd{u} || \suf(\bd{x}, n - |\bd{u}|)$ and $\bd{v'} = \bd{v} || \suf(\bd{y}, n - |\bd{v}|)$. 
Since $\tilde{S}$ is now a code with all codewords having the same length $n$, by Definition~\ref{def-noc}, it remains to show that $\pre(\bd{v}') \cap \suf(\bd{u}') = \emptyset$ and $\pre(\bd{u}') \cap \suf(\bd{v}') = \emptyset$. 
We  check $\pre(\bd{v'}) \cap \suf(\bd{u'})$ first.  
Suppose that $\pre(\bd{v'}) \cap \suf(\bd{u'}) \ne \emptyset$, and we pick some word $\bd{w} \in \pre(\bd{v'}) \cap \suf(\bd{u'}) \ne \emptyset$.
As shown in Figure~\ref{fig:suf_u_pre_v}, there are four cases we need to discuss.

\begin{itemize}
  \item Case (i): $|\bd w| \le n - |\bd{u}|$, $|\bd w| \le |\bd v|$.
  
  If $|\bd w| < |\bd v|$, then $ \pre(\bd{v}, |\bd w|) = \suf(\bd x, |\bd w|)$. This is impossible, since this means that $\pre(\bd{v}) \cap \suf(\bd x) \ne \emptyset$, contradicting that $\bd v$ and $\bd x$ are non-overlapping. If $|\bd w| = |\bd v|$, then $\bd w = \bd v$ is a subword of $\bd x$, which also leads to a contradiction.

  \item Case (ii): $|\bd w| > n - |\bd u|$, $|\bd w| > |\bd{v}|$. 
  
  In this case, depending on the length of $\bd{v}$, either $\bd{v}$ is a subword of $\bd{u}$ or $\suf(\bd{u}, |\bd{w}| - (n - |\bd u|)) = \pre (\bd{v}, |\bd{w}| - (n - |\bd u|))$. 
  Both cases lead to a contradiction to the assumption that $\bd{u}, \bd{v}$ are non-overlapping.

  \item Case (iii): $|\bd w| > n - |\bd u|$, $|\bd w| \le |\bd v|$.
  
  In this case, $\suf(\bd{u}, |\bd{w}| - (n - |\bd{u}|))  = \pre(\bd{v}, |\bd{w}| - (n - |\bd{u}|))$, again a contradiction.
  
  \item Case (iv): $|\bd w| \le n - |\bd u|$, $|\bd w| > |\bd v|$.
  
  In this case, $\bd{v}$ must be a subword of $\bd x$, a contradiction.
\end{itemize}

\begin{figure}[h!]
  \centering 
  \includegraphics[width=0.9\textwidth]{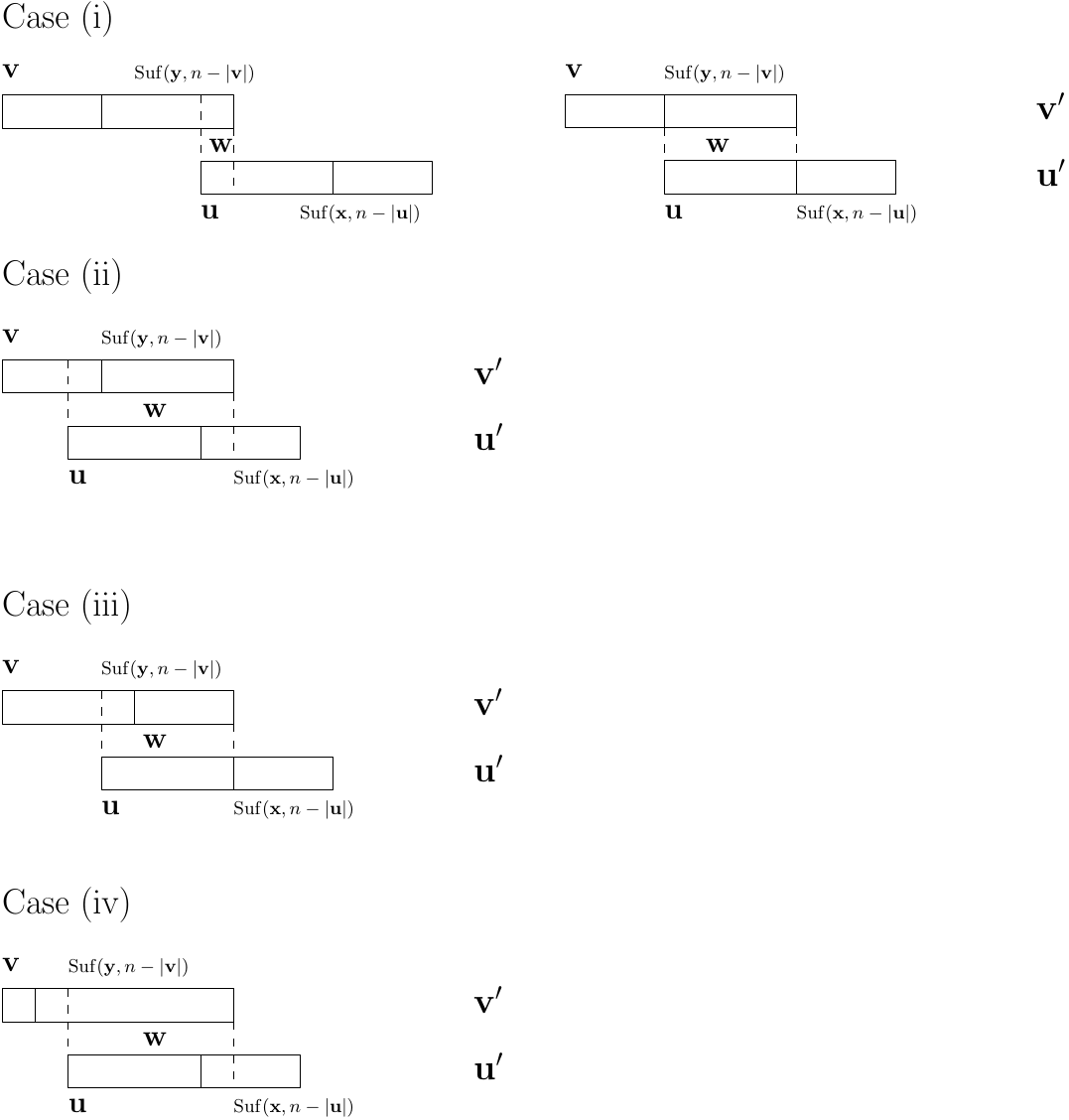}
  \caption{Four possible cases for $\bd{w} \in \pre(\bd{u}') \cap \suf(\bd{v}')$.}
  \label{fig:suf_v_pre_u}
\end{figure}

We now proceed to check $\pre(\bd u') \cap \suf (\bd v')$. 
Again, suppose that $\pre(\bd u') \cap \suf (\bd v') \ne \emptyset$, and we pick a certain word $\bd w \in \pre(\bd u') \cap \suf (\bd v')$. 
There are four cases to consider (see Figure~\ref{fig:suf_v_pre_u}).

\begin{itemize}
  \item Case (i): $|\bd w| \le n - |\bd v|$, $|\bd w| \le |\bd u|$.
  
  If $|\bd w| = |\bd u|$, then $\bd w = \bd u$ becomes a subword of $\bd y$.
  If $|\bd w| < |\bd u|$, then we have $\bd{w} = \suf(\bd{y}, |\bd w|) = \pre(\bd u, |\bd{w}|)$, which further implies that $\suf(\bd{y}) \cap \pre(\bd u) \ne \emptyset$.
  Therefore, both cases lead to a contradiction that the two codewords $\bd u$ and $\bd y$ in $S$ are non-overlapping.

  \item Case (ii): $|\bd w| > n - |\bd v|, |\bd w| > |\bd u|$.
  
  In this case, we have $\suf(\bd v, |\bd w| - (n - |\bd v|)) = \pre(\bd u, |\bd w| - (n - |\bd v|))$, leading to a contradiction that the two codewords $\bd u$ and $\bd v$ are non-overlapping.

  \item Case(iii):$|\bd w| > n - |\bd v|, |\bd w| \le |\bd u|$.
  
  As the same as Case (ii), we have $\suf(\bd v, |\bd w| - (n - |\bd v|)) = \pre(\bd u, |\bd w| - (n - |\bd v|))$, again a contradiction.

  \item Case (iv): $|\bd w| \le n - |\bd v|, |\bd w| > |\bd u|$.
  In this case, $\bd u$ becomes a subword of $\bd{y}$, a contradiction.
\end{itemize}
To sum up, we have both $\pre(\bd{v'}) \cap \suf(\bd{u'}) = \emptyset$ and $\pre(\bd u') \cap \suf (\bd v') = \emptyset$.

It remains to check the case when $|\bd{v}| < |\bd{u}| = n$. This can be done by a similar argument and is thus omitted.
Therefore, $\tilde{S}$ is non-overlapping.
\end{proof}

By the link established in Theorem~\ref{thm:variable-to-fix}, we are able to bound the size of variable-length non-overlapping codes in the following theorem. 

\begin{theorem}\label{thm:variableltfix}
    Let $S$ be a $q$-ary variable-length non-overlapping code with codeword length at most $n$. Let $C(n,q)$ denote the maximum size of a fixed-length $q$-ary non-overlapping code of length $n$. 
    Then we have 
    \begin{equation}\label{eqn-bound1}
        |S| \le C(n,q).
    \end{equation}
\end{theorem}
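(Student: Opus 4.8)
The plan is to exploit the machinery already built in Construction~\ref{cons:variable-to-fix} together with the two preceding results. Given a variable-length non-overlapping code $S$ with codewords of length at most $n$, I would form the fixed-length code $\tilde{S}$ of length $n$ as in the construction. By Theorem~\ref{thm:variable-to-fix}, $\tilde{S}$ is a fixed-length non-overlapping code of length $n$, so by the very definition of $C(n,q)$ as the maximum size of such a code, we immediately have $|\tilde{S}| \le C(n,q)$. The whole theorem therefore reduces to showing that $|S| \le |\tilde{S}|$, i.e.\ that the extension map from $S$ into $\tilde{S}$ does not decrease the number of codewords.

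To establish $|S| \le |\tilde{S}|$, the key observation is that $\tilde{S}$ is defined as the \emph{disjoint} union $\tilde{S} = \cup_{\bd{s} \in S} \tilde{\bd{s}}$, where Lemma~\ref{lem-uv} guarantees that $\tilde{\bd{u}} \cap \tilde{\bd{v}} = \emptyset$ for distinct $\bd{u}, \bd{v} \in S$. Thus $|\tilde{S}| = \sum_{\bd{s} \in S} |\tilde{\bd{s}}|$, and it suffices to prove that each individual set $\tilde{\bd{s}}$ is nonempty, since then $|\tilde{\bd{s}}| \ge 1$ for every $\bd{s}$ and summing over the $|S|$ codewords yields $|S| \le |\tilde{S}|$. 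In other words, I would exhibit an injection from $S$ into $\tilde{S}$ by sending each $\bd{s}$ to any one representative of $\tilde{\bd{s}}$; the injectivity is exactly the content of Lemma~\ref{lem-uv}.

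The step that requires genuine care, and which I expect to be the main obstacle, is verifying that $\tilde{\bd{s}} \ne \emptyset$ for every $\bd{s} \in S$. Examining Eq.~(\ref{eqn-s}), the set $\tilde{\bd{s}}$ consists of the strings $\bd{s} || \suf(\bd{x}, n - |\bd{s}|)$ over all $\bd{x} \in S$ with $|\bd{x}| > n - |\bd{s}|$; if $|\bd{s}| = n$ this is handled by the convention that the empty suffix is appended (so $\tilde{\bd{s}} = \{\bd{s}\}$ is trivially nonempty), but if $|\bd{s}| < n$ one must guarantee that \emph{some} codeword $\bd{x} \in S$ satisfies $|\bd{x}| > n - |\bd{s}|$. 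I would argue this by taking $\bd{x}$ to be a longest codeword of $S$: since $\bd{s} \in S$ has length $|\bd{s}| \ge 2$ and all codewords have length at least $2$, the longest codeword $\bd{x}$ has length $|\bd{x}| \ge |\bd{s}|$, whence $|\bd{x}| \ge |\bd{s}| > n - |\bd{s}|$ whenever $|\bd{s}| > n/2$; the remaining range requires noting that the longest codeword in fact has length at least $n - |\bd{s}| + 1$ for the construction to produce a length-$n$ word at all. The cleanest route is to observe that $\bd{s}$ itself, having $|\bd{s}| > n - |\bd{s}|$ precisely when $|\bd{s}| > n/2$, can serve as the witness $\bd{x}$ in that regime, and otherwise one invokes the existence of a sufficiently long codeword; I would make this dichotomy explicit and confirm that in every case at least one admissible $\bd{x}$ exists, completing the chain $|S| \le |\tilde{S}| \le C(n,q)$.
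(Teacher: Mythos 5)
Your overall route is the same as the paper's: apply Construction~\ref{cons:variable-to-fix}, invoke Theorem~\ref{thm:variable-to-fix} to get $|\tilde{S}| \le C(n,q)$, and reduce everything to $|S| \le |\tilde{S}|$ by combining the disjointness from Lemma~\ref{lem-uv} with the claim that every $\tilde{\bd{s}}$ is nonempty. (The paper packages that last step as the identity $|\tilde{S}| = \sum_{i} |\suf(S,n-i)|\,|S_i| + |S_n|$ together with the assertion $|\suf(S,k)| \ge 1$, which is exactly the same nonemptiness claim.) You correctly identified nonemptiness as the crux; the problem is that your argument for it does not close.

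The gap is in the second branch of your dichotomy. For $|\bd{s}| \le n/2$ you ``invoke the existence of a sufficiently long codeword,'' justified by saying the longest codeword ``has length at least $n - |\bd{s}|+1$ for the construction to produce a length-$n$ word at all.'' That is circular: whether the construction produces any length-$n$ word from $\bd{s}$ is precisely what you are trying to prove. Moreover, under the literal hypothesis of the theorem (``codeword length at most $n$'') such a codeword need not exist: take $q=2$, $n=5$, $S=\{01\}$. Then $S$ is non-overlapping, but no $\bd{x} \in S$ satisfies $|\bd{x}| > 5-2 = 3$, so $\tilde{\bd{s}} = \emptyset$ and hence $\tilde{S} = \emptyset$; the inequality $|S| \le |\tilde{S}|$ you are aiming for is simply false in this case (the theorem's conclusion still holds, just not by this route). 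The repair is to take $n$ to be the length of a longest codeword of $S$, which is how the paper implicitly reads the statement (its conclusion says ``$n$ is the length of the longest codeword,'' and the paper's own assertion $|\suf(S,k)| \ge 1$ silently needs this as well). With that reading, any longest codeword $\bd{x}$ has $|\bd{x}| = n > n - |\bd{s}|$ for every $\bd{s} \in S$, so every $\tilde{\bd{s}}$ is nonempty and no dichotomy is needed at all. If one insists on the literal statement with maximal codeword length $n^* < n$, the argument only yields $|S| \le C(n^*,q)$, and one must additionally prove $C(n^*,q) \le C(n,q)$; this monotonicity is true (for instance, repeating the last symbol of every codeword embeds a length-$m$ non-overlapping code into a length-$(m+1)$ one), but it is a separate step that neither you nor the paper supplies.
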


\begin{proof} 
  We denote by $S_i = S \cap \mathbb{Z}_q^i$ the set of codewords of length $i$ in $S$ for $m \leq i \leq n$, and by $\suf(S,i)$ the set of all possible suffixes of length $i$ of codewords in $S$.   
  By Construction~\ref{cons:variable-to-fix}, we have 
    \begin{eqnarray}\label{eqn-sizets}
       |\tilde{S}| & = & \sum_{\bd{s} \in S} |\tilde{\bd{s}}| = \sum_{i=m}^{n-1} \sum_{\bd{s} \in S_i} |\bar{\bd{s}}|  + |S_n| \nonumber \\
            & = & \sum_{i=m}^{n-1} \sum_{\bd{s} \in S_i} |\suf(S,n-i)| + |S_n| \\
	    & = & \sum_{i=m}^{n-1} |\suf(S, n-i)| |S_i| + |S_n|. \nonumber
    \end{eqnarray}
    
    Notice that $1 \le |\suf(S,k)| \le |\suf(S,k+1)|$ for $1 \leq k \leq n -1$, because every suffix of length $k$ is a subword of a certain suffix of length $k+1$.
    It then follows that 
    \[
        \sum_{i=m}^{n-1} |\suf(S, n-i)| |S_i| \ge \sum_{i=m}^{n-1} |\suf(S, 1)| |S_i|,
    \]
    and we further have 
    \begin{equation}\label{eqn-boundts}
      |\tilde{S}| \ge |S_n| + \sum_{i=m}^{n-1} |\suf(S, 1)| |S_i| \ge \sum_{i=m}^n |S_i| = |S|.
    \end{equation}
    By Theorem~\ref{thm:variable-to-fix}, $|\tilde{S}|$ is a fixed-length $q$-ary non-overlapping code of length $n$. Therefore, we have $|\tilde{S}| \le C(n,q)$, and this completes the proof of Eq.\eqref{eqn-bound1}. 
    
\end{proof}

\section{The minimum average length of variable-length non-overlapping codes}\label{sec-avg}

Now that by the bound of Eq.\eqref{eqn-bound1}, in general, variable-length non-overlapping codes cannot contain more codewords than fixed-length non-overlapping codes. It is then very natural to ask how long the average length of codewords in variable-length non-overlapping codes can be: 

``{\em Find the minimum average length $L$ of $q$-ary (variable-length) non-overlapping codes with cardinality $\tilde{C}$, where $C(n-1,q) < \tilde{C} \le C(n,q)$.}''

In this section, we address this problem and show that the minimum average length should be close to $n$ asymptotically.  Since non-overlapping codes are special prefix codes, we recall the following bound on the average length of a prefix code, which can also be applied to non-overlapping codes.


\begin{theorem}~\cite[Theorem~5.31]{CT06} \label{thm:L_low}
    Consider a prefix code for a random variable $X$ with codeword lengths $l_1, l_2, \ldots$ and corresponding probability $p_1, p_2, \ldots$. The expected length $L = \sum p_i l_i$ of any $q$-ary prefix code for a random variable $X$ satisfies 
    \[
        L \ge H_q(X),
    \]
    where $H_q(X) = -\sum_i p_i \log_q p_i $ is the $q$-ary entropy function of $X$, and equality holds if and only if $q^{-l_i} = p_i$.
\end{theorem}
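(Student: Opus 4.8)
The plan is to prove this classical source-coding bound by combining Kraft's inequality with the non-negativity of relative entropy. First I would invoke Kraft's inequality: since the code is a $q$-ary prefix code, its codeword lengths satisfy $\sum_i q^{-l_i} \le 1$. This is the only place where the prefix property is used, and it supplies the essential structural constraint on the lengths $l_i$.

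Next I would rewrite the gap between the expected length and the entropy as a single sum. Using the definitions, $L - H_q(X) = \sum_i p_i l_i + \sum_i p_i \log_q p_i = \sum_i p_i \log_q \frac{p_i}{q^{-l_i}}$, so the entire claim reduces to showing this quantity is non-negative. To isolate a genuine probability distribution, I would set $c = \sum_j q^{-l_j}$ and $r_i = q^{-l_i}/c$, so that $(r_i)$ sums to $1$. Substituting yields $L - H_q(X) = \sum_i p_i \log_q \frac{p_i}{r_i} - \log_q c = D(p\|r) - \log_q c$, where $D(p\|r)$ is the $q$-ary relative entropy between $(p_i)$ and $(r_i)$.

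The two terms are then treated separately. By Kraft's inequality $c \le 1$, hence $-\log_q c \ge 0$. The main step is to show $D(p\|r) \ge 0$, which I would obtain from Jensen's inequality applied to the convex function $-\log_q$ on the positive reals: $-\sum_i p_i \log_q \frac{r_i}{p_i} \ge -\log_q \sum_i p_i \frac{r_i}{p_i} = -\log_q \sum_i r_i = -\log_q 1 = 0$. Combining the two non-negative contributions gives $L \ge H_q(X)$. I expect this Jensen (equivalently Gibbs or log-sum) step to be the only genuinely non-trivial part; everything else is bookkeeping resting on Kraft's inequality.

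Finally, for the equality characterization, $L = H_q(X)$ forces both $D(p\|r) = 0$ and $\log_q c = 0$ simultaneously. The equality condition in Jensen's inequality gives $p_i = r_i$ for all $i$, while the second condition gives $c = 1$. Together these yield $p_i = q^{-l_i}$ for every $i$, which is exactly the stated equality condition.
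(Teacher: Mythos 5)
Your proof is correct, and it is essentially the same argument as the one behind the paper's statement: the paper quotes this result without proof from Cover and Thomas~\cite[Theorem~5.3.1]{CT06}, whose proof is exactly your combination of Kraft's inequality with the non-negativity of relative entropy (via Jensen/Gibbs), including the same equality analysis. Nothing further is needed, beyond noting that the Jensen equality step implicitly uses strict convexity of $-\log_q$ to force $p_i = r_i$ on the support of $p$, which is the standard treatment.
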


By applying Theorem~\ref{thm:L_low} to non-overlapping codes, we have the following result. 

\begin{corollary}\label{cor:L_low}
  The average length $L$ of a $q$-ary non-overlapping codes with cardinality $\tilde{C}$ satisfies 
    \[
    L \ge \lceil \log_q \tilde{C} \rceil,
    \]
    where the equality holds if and only if each codeword is of length $\lceil \log_q \tilde{C} \rceil$.
\end{corollary}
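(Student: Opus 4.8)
The plan is to lean on the observation, already recorded before the statement, that every non-overlapping code is a prefix code, so that Theorem~\ref{thm:L_low} applies directly to the $\tilde C$ codewords of $S$. The one degree of freedom in that theorem is the probability distribution assigned to the codewords, and the natural choice that turns the abstract quantity $L=\sum_i p_i l_i$ into the genuine average codeword length is the uniform distribution $p_i = 1/\tilde C$. With this choice the $q$-ary entropy simplifies to $H_q(X) = -\sum_{i} \frac{1}{\tilde C}\log_q\frac{1}{\tilde C} = \log_q \tilde C$, which is moreover the largest value the entropy can take over all distributions on $\tilde C$ symbols, so it yields the strongest bound Theorem~\ref{thm:L_low} can give. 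Substituting into $L \ge H_q(X)$ immediately produces $L \ge \log_q \tilde C$.

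The step I expect to be the main obstacle is promoting the real-valued inequality $L \ge \log_q \tilde C$ to the integer inequality $L \ge \lceil \log_q \tilde C\rceil$. When $\tilde C$ is an exact power of $q$, say $\tilde C = q^{\ell}$, one has $\lceil \log_q \tilde C\rceil = \ell = \log_q \tilde C$ and there is nothing more to do. In the remaining case $\log_q \tilde C$ is not an integer and $L$, being an average of integer lengths, need not be one either, so the rounding is not formal and the non-overlapping hypothesis has to be brought in. The route I would attempt is to write $\ell = \lceil\log_q \tilde C\rceil$ and $|S_m|$ for the number of length-$m$ codewords, and to control the \emph{short} codewords through Theorem~\ref{thm:variableltfix}: the codewords of length at most $m$ form a non-overlapping code in their own right, hence number at most $C(m,q)$. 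The target reduces to showing $\sum_m (m-\ell)\,|S_m| \ge 0$, i.e. that the surplus contributed by codewords longer than $\ell$ dominates the deficit of those shorter than $\ell$. This counting balance is the genuinely hard part, and it is where the delicate interplay between the capacities $C(m,q)$ and $\tilde C$ --- rather than mere prefix-freeness --- must be exploited; it also deserves care, since short codewords admitted by the non-overlapping capacities can a priori pull the average back toward $\log_q\tilde C$.

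For the equality characterization I would read off the equality clause of Theorem~\ref{thm:L_low}, which requires $q^{-l_i} = p_i = 1/\tilde C$ for every $i$, forcing all codewords to share the common length $\log_q \tilde C$. This is consistent precisely when that value is the integer $\ell$, i.e. when $S$ is a fixed-length code all of whose codewords have length $\ell = \lceil\log_q\tilde C\rceil$; conversely such a code plainly attains $L=\ell$. The entropy inequality and the equality analysis are then routine, and essentially all of the difficulty is concentrated in the integrality upgrade of the middle paragraph.
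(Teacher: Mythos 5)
Your first paragraph is essentially the paper's entire proof: the authors' argument is one sentence --- view the code as a prefix code for a uniform random variable on $\{1,\dots,\tilde C\}$ and apply Theorem~\ref{thm:L_low} --- which yields $L \ge H_q(X) = \log_q \tilde C$, after which the ceiling is written down with no further justification. So the step you isolate as the main obstacle, promoting $L \ge \log_q \tilde C$ to $L \ge \lceil \log_q \tilde C\rceil$, is exactly the step the paper skips, and your instinct that it is not a formality is correct: $L$ is an average of integer lengths, not an integer, so the rounding requires a genuine argument. Your proposal, however, does not supply one either --- the reduction to $\sum_m (m-\ell)\,|S_m| \ge 0$ is explicitly deferred as ``the genuinely hard part'' --- so as it stands you have proved only $L \ge \log_q \tilde C$.

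The deeper issue is that this gap cannot be closed, because the ceiling inequality is false. Take $q = 9$, $A = \{0,1,2\}$, and
\[
S \;=\; \{03\} \,\cup\, \bigl\{\, xyz \;:\; x \in A,\ y,z \notin A,\ (x,y) \ne (0,3) \,\bigr\} \;\subseteq\; \mathbb{Z}_9^2 \cup \mathbb{Z}_9^3 .
\]
Every nontrivial prefix of a codeword of $S$ begins with a symbol of $A$ and every nontrivial suffix begins with a symbol outside $A$, so condition 1) of Definition~\ref{def-noc} holds; and $03$ cannot occur as a subword of any $xyz$, since $xy = 03$ is excluded by construction and $y \ne 0$, so condition 2) holds as well. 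Thus $S$ is non-overlapping with $\tilde C = 1 + 3\cdot 6\cdot 6 - 6 = 103 > 81 = q^2$, so $\lceil \log_9 \tilde C \rceil = 3$; yet $S$ contains the length-$2$ word $03$, whence $L = (2 + 3\cdot 102)/103 < 3$. In particular the balance you hoped to establish fails here: $\ell = 3$, $|S_2| = 1$, $|S_3| = 102$, and $\sum_m (m-\ell)|S_m| = -1 < 0$. (The same example undercuts the lower bound $\lceil \log_q \tilde C\rceil \le L$ in Theorem~\ref{thm:L}.) What is true, and what both your first paragraph and the paper actually prove, is $L \ge \log_q \tilde C$, with equality, by the equality clause of Theorem~\ref{thm:L_low}, precisely when $\tilde C$ is a power of $q$ and every codeword has length $\log_q \tilde C$. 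Your identification of the rounding step as the crux is the most valuable part of the proposal; the correct conclusion to draw from it is that the ceiling must be removed from the statement, not that it awaits a cleverer proof.
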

\begin{proof}
  By viewing the non-overlapping code as a prefix code for a uniform random variable $X$ that takes values on $\{1,\dots,\tilde{C}\}$, the proof is completed.
\end{proof}

\begin{theorem}\label{thm:L}
  The minimal average length $L$ of a $q$-ary non-overlapping codes with cardinality $\tilde{C}$, $C(n-1,q) < \tilde{C} \le C(n,q)$ satisfies 
    \begin{equation}
      \lceil \log_q \tilde{C} \rceil \le L \le n,
    \end{equation}
    and 
    \begin{equation}
        n - 2 \le L \le n \quad \text{as } q \to \infty.
    \end{equation}
\end{theorem}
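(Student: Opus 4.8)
The plan is to establish the four inequalities in two groups: the two bounds $\lceil \log_q \tilde{C}\rceil \le L$ and $L \le n$ that hold for every $q$, and then the asymptotic refinement $n-2 \le L$. The lower bound $\lceil \log_q \tilde{C}\rceil \le L$ is immediate from Corollary~\ref{cor:L_low}, since every $q$-ary non-overlapping code of cardinality $\tilde{C}$ is a prefix code and the bound applies in particular to the average-length minimizer. For the upper bound $L \le n$, I would observe that $L$ is a minimum over all non-overlapping codes of cardinality $\tilde{C}$, so it suffices to exhibit one such code of average length $n$. Because $\tilde{C} \le C(n,q)$, a maximum fixed-length non-overlapping code of length $n$ contains at least $\tilde{C}$ codewords; selecting any $\tilde{C}$ of them gives a non-overlapping code of cardinality $\tilde{C}$ all of whose codewords have length $n$, hence average length exactly $n$, so $L \le n$.

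The crucial input for the asymptotic statement is a lower bound on $C(n-1,q)$, which I would supply by an explicit construction. Consider the set of all words $0\,c_2\cdots c_{n-1}\in\mathbb{Z}_q^{\,n-1}$ with $c_2,\dots,c_{n-1}\in\{1,\dots,q-1\}$. Every nontrivial prefix of such a word carries the leading zero, while every nontrivial suffix consists entirely of nonzero symbols, so $\pre(\bd{u})\cap\suf(\bd{v})=\emptyset$ for any two of them; since all words share the same length, no one contains another as a subword. Thus this is a fixed-length non-overlapping code of length $n-1$, which yields $C(n-1,q)\ge (q-1)^{n-2}$.

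Combining this construction with the hypothesis $\tilde{C}>C(n-1,q)$ and Corollary~\ref{cor:L_low}, I would write
\[
  L \ge \lceil \log_q \tilde{C}\rceil \ge \log_q\bigl(C(n-1,q)+1\bigr) > (n-2)\log_q(q-1).
\]
As $q\to\infty$ we have $\log_q(q-1)\to 1$, so the right-hand side tends to $n-2$; in particular, for all sufficiently large $q$ it exceeds $n-3$, whence $\log_q\tilde{C}>n-3$ and therefore $\lceil \log_q\tilde{C}\rceil\ge n-2$. This gives $L\ge n-2$, and together with $L\le n$ it yields the second displayed inequality of the theorem.

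The main obstacle is precisely the quality of the lower bound on $C(n-1,q)$: the elementary construction above loses roughly a factor of $q$ against the Levenshtein bound, and this loss of one power of $q$ is exactly what produces the exponent $n-2$ rather than $n-1$ after taking $\log_q$. Replacing it with the asymptotically optimal Levenshtein construction would sharpen the constant to $n-1$, but the simple construction keeps the argument self-contained while still confining $L$ to a bounded additive gap below $n$, which is the qualitative conclusion sought.
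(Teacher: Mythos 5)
Your proposal is correct and follows essentially the same route as the paper: the lower bound from Corollary~\ref{cor:L_low}, the upper bound by discarding codewords from a maximum fixed-length code of length $n$, and the asymptotic bound via the classic construction giving $C(n-1,q)\ge (q-1)^{n-2}$. In fact your handling of the ceiling in the limiting step is slightly more careful than the paper's (whose displayed chain contains a typo, writing $\lceil C(n-1,q)\rceil$ for $\lceil \log_q C(n-1,q)\rceil$), but the argument is the same.
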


\begin{proof}
    We first have $L \le n$. 
    This is because by definition there exists a non-overlapping code with length $n$ and cardinality $C(n,q)$, and by dropping $C(n,q) - \tilde{C}$ codewords from that code forms a non-overlapping code with average length $n$.
    The lower bound $L \ge \lceil \log_q \tilde{C} \rceil$ comes from Corollary~\ref{cor:L_low}.

    The following construction for non-overlapping code is classic (see for example~\cite{Chee13}). 
    \begin{equation*}
        S = \{\bd{x} \vert \bd{x}_1 = 0, \bd{x}_i \ne 0, i = 2,\dots, n \}.
    \end{equation*}
    Therefore,
    \[
        C(n-1,q) \ge |S| = (q-1)^{n-2},
    \]
    and 
    \begin{equation*}
        \begin{split}
            \lceil \log_q \tilde{C} \rceil & \ge \lceil C(n-1, q) \rceil \\
                                    & \ge \lceil \log_q (q-1)^{n-2} \rceil  \\
                                    & = \lceil (n-2) \log_q (q-1) \rceil \\
                                    & = n - 2 \text{ as } q \to \infty.
        \end{split}
    \end{equation*}
\end{proof}

\section{Conclusion}\label{sec-con}
In this paper, we proved that the size of $q$-ary variable-length non-overlapping codes is upper bounded by $C(n,q)$, where $n$ is the length of the longest codeword, and $C(n,q)$ is the maximum size of fixed-length $q$-ary non-overlapping code of length $n$.
Furthermore, we investigate the minimal average length $L$ of variable-length non-overlapping codes and demonstrate that $n-2 \leq L \leq n$, when the cardinality of the code is between $C(n-1,q)$ and $C(n,q)$, and as $q$ tends to infinity.
These results suggest that variable-length non-overlapping codes do not offer advantages in terms of cardinality compared to fixed-length non-overlapping codes.




\providecommand{\bysame}{\leavevmode\hbox to3em{\hrulefill}\thinspace}
\providecommand{\MR}{\relax\ifhmode\unskip\space\fi MR }
\providecommand{\MRhref}[2]{%
  \href{http://www.ams.org/mathscinet-getitem?mr=#1}{#2}
}
\providecommand{\href}[2]{#2}

\end{document}